\newtheorem{theorem}{Theorem}
\newtheorem{corollary}{Corollary}
\newtheorem{lemma}{Lemma}
\newtheorem{prop}{Proposition}
    \newenvironment{pf}[1][\textit{Pf}]{ \proof[#1]\ }{\endproof}
    \newcommand\numberthis{\addtocounter{equation}{1}\tag{\theequation}}
    \newcommand{\bfit}[1]{\textcolor{black}{\textit{\textbf{#1}}}}
    \NewDocumentCommand\explainequals{m}{\overset{\textit{#1}}{=}}
    \let\oldcheck\checkmark
    \renewcommand{\checkmark}{\textcolor{green}{\text{\oldcheck}} }
    \newcommand{\hrulechapter}{\ \hrule width \hsize \kern 1mm \hrule width \hsize height 0.8pt}
\begin{document}

\title{Quantum Weight Enumerators for Real Codes \\  with $X$ and $Z$ Exactly Transversal}

\author{Eric Kubischta}
\email{erickub@umd.edu}
\author{Ian Teixeira}
\email{igt@umd.edu}
\author{J.\ Maxwell Silvester}
\email{jsilves1@uci.edu}

\begin{abstract}
In this note we show that the weight enumerators of a real quantum error correcting code with $ X $ and $ Z $ exactly transversal must satisfy certain identities. One consequence of these identities is that if the code is error detecting then it is automatically error correcting for free; implying a relationship between transversality and code distance.
\end{abstract}

\maketitle

\newpage 
\section{Introduction}

Let $((n,K,d))$ denote an $n$-qubit quantum error correcting code with a codespace of dimension $K$ and with a distance of $d$. 

In this paper we will study quantum weight enumerators \cite{quantumMacWilliams} for $K = 2$ codes under the assumption that the codes are \textit{real} and the Pauli gates $ X = \smqty(0 & 1 \\ 1 & 0)$ and $ Z = \smqty(1 & 0 \\ 0 & -1)$ are \textit{exactly} transversal. We say a code is \bfit{real} if the orthogonal projector onto the codespace is real, $ \Pi^*=\Pi $. Equivalently a code is real if there exists a basis of codewords that has all real coefficients with respect to the computational basis.

For a given code we say that a gate $U$ is \bfit{exactly transversal} if $U^{\otimes n}$ on the physical space implements logical $U$ on the code space. 

These assumptions seem restrictive but in fact many familiar codes are real and have $ X $ and $ Z $ exactly transversal. Some examples include: the $[[5,1,3]]$ code, the $[[7,1,3]]$ Steane code, Shor's $[[9,1,3]]$ code, the $[[11,1,5]]$ code, the $ [[15,1,3]] $ code and indeed the whole family of $ [[2^r-1,1,3]] $ quantum Reed-Muller codes.

\section{Preliminaries}

\subsection{Weight Enumerators}

Let $\Pi$ be the code projector. The weight enumerator and the dual weight enumerator are defined in \cite{quantumMacWilliams} as
\begin{align*}
    A_i &= \frac{1}{K^2} \sum_{E \in \mathcal{E}_i} \Tr{ E \Pi}^2, \\
    B_i &= \frac{1}{K} \sum_{E \in \mathcal{E}_i} \Tr{E \Pi E \Pi},
\end{align*}
where $\mathcal{E}_i$ are the Pauli errors with weight $i$. Usually ``weight enumerator" refers to a polynomial with these coefficients, but we will instead refer to the coefficient vectors $A = (A_0, \dots, A_n) $ and $B = (B_0, \dots, B_n) $ as the weight enumerators. 

Two codes are said to be \bfit{equivalent} if their code spaces are related by a non-entangling gate, i.e., a gate from $U(2)^{\otimes n} \rtimes S_n$, the local unitaries together with permutations. The weight enumerators form a code invariant. 
\begin{lemma}\cite{quantumMacWilliams,quantumweight}\label{lem:equivalentCodes}
Equivalent codes have the same weight enumerators $A$ and $B$.
\end{lemma}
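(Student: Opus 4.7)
The plan is to reduce invariance under the semidirect product $U(2)^{\otimes n} \rtimes S_n$ to invariance under its two natural classes of generators, qubit permutations and local unitaries. Permutations $\sigma \in S_n$ are essentially free: the conjugation $\Pi \mapsto U_\sigma \Pi U_\sigma^\dagger$ merely relabels tensor factors, and $U_\sigma^\dagger E U_\sigma$ is again a Pauli string of the same weight, so the sums defining $A_i$ and $B_i$ are just reindexed over $\mathcal{E}_i$. I would dispense with this case in one line.

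The substantive case is a local unitary $V = V_1 \otimes \cdots \otimes V_n$, under which a Pauli $E$ gets mapped to a linear combination of Paulis of the same support, so pointwise bookkeeping fails. My plan is instead to recast both enumerators in a basis-free form on the doubled $n$-qubit space. Using the swap identity $\Tr(XY) = \Tr\!\big(F(X \otimes Y)\big)$, where $F$ is the swap between the two copies, one obtains
\[
A_i = \tfrac{1}{K^2}\,\Tr\!\big((\Pi \otimes \Pi)\,\mathcal{S}_i\big), \qquad
B_i = \tfrac{1}{K}\,\Tr\!\big(F(\Pi \otimes \Pi)\,\mathcal{S}_i\big),
\]
where $\mathcal{S}_i \defeq \sum_{E \in \mathcal{E}_i} E \otimes E$. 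Since $F$ commutes with $V \otimes V$ for every unitary $V$, it then suffices to show that $\mathcal{S}_i$ is fixed by conjugation by $V \otimes V$ whenever $V$ is local.

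That last step is the only place where real computation happens, and it is short. Factorize
\[
\mathcal{S}_i \;=\; \sum_{|S|=i}\bigotimes_{j \in S} T \;\bigotimes_{j \notin S}(I \otimes I), \qquad T \defeq X \otimes X + Y \otimes Y + Z \otimes Z,
\]
so the invariance reduces to the single-qubit claim $(V_j \otimes V_j)\,T\,(V_j \otimes V_j)^\dagger = T$. I would prove this from three standard facts: $\{X,Y,Z\}/\sqrt{2}$ is an orthonormal basis of the real vector space of traceless Hermitian $2 \times 2$ matrices under the Hilbert--Schmidt inner product; $P \mapsto V_j P V_j^\dagger$ acts as a real orthogonal (Bloch) rotation on this basis; and the tensor $\sum_k e_k \otimes e_k$ built from any orthonormal basis is invariant under orthogonal change of basis. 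The only real obstacle is finding this basis-free reformulation; once $A_i$ and $B_i$ are expressed through $\mathcal{S}_i$, the invariance argument is essentially transparent.
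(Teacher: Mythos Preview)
Your argument is correct. The reduction to generators, the swap-trick reformulation $A_i=\tfrac{1}{K^2}\Tr\!\big((\Pi\otimes\Pi)\mathcal{S}_i\big)$ and $B_i=\tfrac{1}{K}\Tr\!\big(F(\Pi\otimes\Pi)\mathcal{S}_i\big)$, the factorization of $\mathcal{S}_i$ over qubit sites, and the single-qubit invariance $(V_j\otimes V_j)\,T\,(V_j\otimes V_j)^\dagger=T$ via the orthogonality of the Bloch rotation are all sound; the one small step you left implicit, that $F$ commutes with $\Pi\otimes\Pi$ (needed to match your ordering of $F$, $\Pi\otimes\Pi$, and $\mathcal{S}_i$ inside the trace), is immediate since $F(A\otimes A)=(A\otimes A)F$.

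As for comparison: the paper does not actually prove this lemma. It is stated with citations to \cite{quantumMacWilliams,quantumweight} and used as a known fact, so there is no in-paper argument to set yours against. Your proof is in the spirit of Rains' original treatment in \cite{quantumweight}, which likewise exploits the $U(2)$-invariance of $I\otimes I$ and $T=X\otimes X+Y\otimes Y+Z\otimes Z$ on each site; your packaging through the doubled-space operators $\mathcal{S}_i$ and the swap $F$ is a clean, self-contained way to present that idea.
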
 

One of the most useful properties of weight enumerators is computing distance.
\begin{lemma}\cite{quantumMacWilliams} \label{lem:WtEnumDist}
The distance of an $((n,K))$ code is $d$ iff $A_i = B_i$ for all $i < d$.
\end{lemma}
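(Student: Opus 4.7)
The plan is to connect the equality $A_i = B_i$ to the Knill--Laflamme error detection conditions via a Cauchy--Schwarz style inequality, and then invoke the characterization of distance through detectability.

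First I would reformulate both enumerators in terms of the single operator $M_E \defeq \Pi E \Pi$, which is Hermitian since $E$ (a Pauli) and $\Pi$ are Hermitian. Using $\Pi^2 = \Pi$ and cyclicity of the trace gives $\Tr(E\Pi) = \Tr(M_E)$ and $\Tr(E\Pi E\Pi) = \Tr(M_E^2)$. Because $M_E$ is supported on the $K$-dimensional codespace, it has at most $K$ nonzero eigenvalues $\lambda_1,\dots,\lambda_K$, and Cauchy--Schwarz on $\R^K$ gives
\begin{equation*}
\Tr(E\Pi)^2 = \Bigl(\sum_j \lambda_j\Bigr)^2 \leq K \sum_j \lambda_j^2 = K\,\Tr(E\Pi E\Pi),
\end{equation*}
with equality if and only if all $\lambda_j$ are equal, i.e., $\Pi E \Pi = c(E)\,\Pi$ for some scalar $c(E)$. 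Summing this inequality over all Pauli errors $E$ of weight $i$ and dividing by $K^2$ yields $A_i \leq B_i$, with equality iff every weight-$i$ Pauli $E$ satisfies $\Pi E \Pi \propto \Pi$.

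Next I would invoke the Knill--Laflamme error detection criterion: an $((n,K))$ code has distance at least $d$ iff for every Pauli $E$ with $\wt(E) < d$ one has $\Pi E \Pi = c(E)\,\Pi$. Combining this with the previous paragraph, $A_i = B_i$ for all $i < d$ is exactly equivalent to the detection condition holding for all errors of weight strictly less than $d$, which is in turn equivalent to the code having distance (at least) $d$.

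The only step that requires a bit of care is justifying equality in Cauchy--Schwarz and translating it cleanly into the statement $\Pi E \Pi \propto \Pi$; everything else is bookkeeping with the definitions of $A_i$ and $B_i$ and invoking the Knill--Laflamme conditions as a black box. I would not expect any genuine obstacle, since the quantum MacWilliams paper of Shor--Laflamme establishes this very correspondence; my job is just to lay out the two ingredients (the per-error Cauchy--Schwarz bound and the detection criterion) and observe that they fit together termwise in $i$.
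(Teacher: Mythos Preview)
The paper does not actually prove this lemma; it is stated with a citation to \cite{quantumMacWilliams} and no argument is given. Your sketch is correct and is essentially the standard Shor--Laflamme argument: the per-error Cauchy--Schwarz inequality $\Tr(E\Pi)^2 \le K\,\Tr(E\Pi E\Pi)$, with equality exactly when $\Pi E\Pi = c(E)\,\Pi$, summed over $\mathcal{E}_i$ and combined with the Knill--Laflamme detection criterion. One minor point you already flagged: what the argument literally establishes is that $A_i=B_i$ for all $i<d$ is equivalent to the distance being \emph{at least} $d$; the ``distance is $d$'' phrasing in the lemma is the usual shorthand and implicitly requires $A_d\neq B_d$ as well.
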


\subsection{Exactly Transversal}

\begin{lemma}\label{lem:nodd}
    Exactly transversal $X$ and $Z$ implies $n$ is odd. 
\end{lemma}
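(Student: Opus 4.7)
The plan is to use the fact that on a $K=2$ codespace, the logical $X$ and logical $Z$ operators must anticommute (since they act as the usual Pauli matrices on a single logical qubit), and combine this with the transversality hypothesis to extract a constraint on the parity of $n$.

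First, I would unpack the hypothesis. Exactly transversal $X$ means $X^{\otimes n}$ acts on the codespace as the logical operator $\bar X$, and similarly $Z^{\otimes n}$ acts as $\bar Z$. Therefore, the restrictions to the codespace satisfy
\begin{equation*}
    \bar X \bar Z = (X^{\otimes n})(Z^{\otimes n})\big|_{\text{code}}, \qquad \bar Z \bar X = (Z^{\otimes n})(X^{\otimes n})\big|_{\text{code}}.
\end{equation*}

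Next, I would use the single-qubit identity $XZ = -ZX$ applied tensor factor by tensor factor to get
\begin{equation*}
    X^{\otimes n} Z^{\otimes n} = (XZ)^{\otimes n} = (-1)^n (ZX)^{\otimes n} = (-1)^n Z^{\otimes n} X^{\otimes n}.
\end{equation*}
Restricting this identity to the codespace yields $\bar X \bar Z = (-1)^n \bar Z \bar X$.

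Finally, since the codespace has dimension $K=2$, the logical Pauli operators must obey the single-qubit Pauli algebra, in particular $\bar X \bar Z = -\bar Z \bar X$. Comparing with the previous relation forces $(-1)^n = -1$, i.e. $n$ is odd. I do not anticipate a real obstacle here; the only subtlety worth noting is verifying that the anticommutation of $\bar X$ and $\bar Z$ on the codespace is nontrivial (i.e.\ that $\bar Z \bar X \ne 0$), which is immediate because $\bar X$ and $\bar Z$ are unitary on a two-dimensional space.
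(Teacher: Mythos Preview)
Your argument is correct and is exactly the approach the paper takes: compare the group commutator of $X^{\otimes n}$ and $Z^{\otimes n}$, which is $(-1)^n$, with the required anticommutation of logical $X$ and $Z$ on the two-dimensional codespace. Your write-up simply spells out the paper's one-line proof in more detail.
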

\begin{proof}
The commutator of $ X^{\otimes n} $ and $ Z^{\otimes n} $ is $ (-1)^n $. The commutator of logical $ X $ and logical $ Z $ must be $ -1 $. Thus $ n $ must be odd.
\end{proof}

\subsection{Weights}
The Hamming weight of a bit string is the number of $ 1 $'s it has. For example, the bit string $ 000 $ has Hamming weight $ 0 $ while $ 011,101,110 $ all have have Hamming weight $ 2 $. So 
$
000,011,101,110 
$
are all of the even weight bit strings of length $ 3 $. And 
$
111,100,010,001 
$
are all the odd weight bit strings of length $ 3 $.
The weight of a computational basis ket is the Hamming weight of the bit string it corresponds to. So 
$$
\ket{000},\ket{011},\ket{101},\ket{110} 
$$
are all the even weight computational basis kets for $ 3 $ qubits. And 
$$
\ket{111},\ket{100},\ket{010},\ket{001} 
$$
are all the odd weight kets.
Now we turn our attention to Pauli operators.

Up to a global phase, every $ n $ qubit Pauli operator $ E $ is a tensor product of the form $E = g_1 \otimes \cdots \otimes g_n$ where each $ g_i $ is a 1-qubit Pauli gate $ g_i \in \{ I ,X, Y ,Z \} $. Define the \textit{weight} of $ E $, denoted $ wt(E) $, to be the number of $ g_i \in \{ X,Y,Z \}  $ in this tensor product.

Define the $ X $ weight, $ wt_X(E) $, to be the number of $ g_i \in \{ X,Y \} $ and define the $ Z $ weight, $ wt_Z(E) $, to be the number of $ g_i \in \{ Z,Y \} $. Define $ n_Y(E) $ to be the number of $ g_i=Y $. Similarly, define $ n_X(E) $ to be the number of $ g_i=X $ and define $ n_Z(E) $ to be the number of $ g_i=Z $.

The following lemma shows that $ n_Y(E) $ is a function of the other three weights.

\begin{lemma}\label{lem:wtY}
\[
    n_Y = wt_X + wt_Z - wt.
\]
\end{lemma}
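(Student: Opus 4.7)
The plan is to prove the identity by simple counting: each tensor factor $g_i$ belongs to exactly one of the four classes $\{I, X, Y, Z\}$, so the counts $n_I, n_X, n_Y, n_Z$ form a partition of the index set $\{1, \dots, n\}$. The various weights are then just sums over different subsets of these four counts, and the identity reduces to arithmetic.

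First I would rewrite each of the four quantities on both sides of the equation in terms of $n_X, n_Y, n_Z$. Directly from the definitions: $wt(E) = n_X + n_Y + n_Z$, since the weight counts the number of $g_i \in \{X, Y, Z\}$; $wt_X(E) = n_X + n_Y$, since $wt_X$ counts the number of $g_i \in \{X, Y\}$; and $wt_Z(E) = n_Y + n_Z$, since $wt_Z$ counts the number of $g_i \in \{Y, Z\}$.

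Next I would substitute these expressions into the right-hand side:
\begin{equation*}
wt_X + wt_Z - wt = (n_X + n_Y) + (n_Y + n_Z) - (n_X + n_Y + n_Z) = n_Y,
\end{equation*}
which is the desired identity.

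There is no real obstacle here; the only thing to be careful about is verifying that the defined $X$ and $Z$ weights really do include the $Y$ factors (so that the $Y$'s are double-counted on the left and must be subtracted off by $wt$). This is exactly why Pauli $Y$ appears in both the $\{X, Y\}$ class used for $wt_X$ and the $\{Y, Z\}$ class used for $wt_Z$, reflecting the fact that $Y$ can be thought of as carrying both an $X$ and a $Z$ component (up to phase, $Y \propto XZ$).
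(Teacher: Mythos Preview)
Your proof is correct and is essentially the same as the paper's: both express $wt$, $wt_X$, and $wt_Z$ in terms of the counts $n_X, n_Y, n_Z$ and reduce the identity to arithmetic. The only cosmetic difference is that the paper first writes $n_X = wt_X - n_Y$ and $n_Z = wt_Z - n_Y$ and substitutes into $wt = n_X + n_Y + n_Z$, whereas you substitute in the other direction.
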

\begin{pf}
Suppose the total weight of a Pauli string is $wt$. There are $ n_X=wt_X-n_Y $ many $g_i=X$ factors, $n_Z=wt_Z-n_Y $ many $ g_i=Z $, and $n_Y$ many $g_i=Y$. These must add up to the total weight $wt=n_X+n_Y+n_Z$. Thus $wt = (wt_X- n_Y) +  n_Y + (wt_Z - n_Y)$. Rearranging we have the claim.
\end{pf}

\subsection{Symmetric operators}

A matrix $M$ is said to be \bfit{symmetric} if $M = M^T$ and is said to be \bfit{anti-symmetric} if $M = - M^T$. The Pauli matrices $I$, $X$, and $Z$ are symmetric and $Y$ is anti-symmetric.

\begin{lemma}\label{lem:realsymmetricY}
Let $E = g_1 \otimes \cdots \otimes g_n$ be a Pauli string. Then $ E $ is symmetric iff $ n_Y(E) $ is even and $E$ is antisymmetric iff $ n_Y(E) $ is odd. 
\end{lemma}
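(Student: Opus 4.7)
The plan is to compute $E^T$ directly using the multiplicativity of the transpose under tensor products, and read off the sign.

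First I would recall the identity $(A \otimes B)^T = A^T \otimes B^T$, which extends by induction to
\[
    E^T = (g_1 \otimes \cdots \otimes g_n)^T = g_1^T \otimes \cdots \otimes g_n^T.
\]
The paper has already observed that $I, X, Z$ are symmetric and $Y$ is antisymmetric, so each factor satisfies $g_i^T = g_i$ when $g_i \in \{I, X, Z\}$ and $g_i^T = -g_i$ when $g_i = Y$. Therefore each $g_i$ contributes a sign of $+1$ unless it equals $Y$, in which case it contributes $-1$.

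Pulling these scalar signs out of the tensor product gives
\[
    E^T = (-1)^{n_Y(E)}\, g_1 \otimes \cdots \otimes g_n = (-1)^{n_Y(E)} E.
\]
From this single identity both halves of the biconditional follow immediately: if $n_Y(E)$ is even then $E^T = E$, i.e.\ $E$ is symmetric, and if $n_Y(E)$ is odd then $E^T = -E$, i.e.\ $E$ is antisymmetric. Conversely, since $E \neq 0$, the sign $(-1)^{n_Y(E)}$ is determined by $E^T = \pm E$, so symmetric forces $n_Y(E)$ even and antisymmetric forces $n_Y(E)$ odd.

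There is no real obstacle here; the only thing worth being careful about is that the equivalence is genuinely an ``iff'' rather than just the forward direction, which is handled by noting that $E$ is nonzero so the sign on the right side of $E^T = (-1)^{n_Y(E)} E$ is uniquely determined.
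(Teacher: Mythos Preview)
Your argument is correct and is exactly the natural one: use $(g_1 \otimes \cdots \otimes g_n)^T = g_1^T \otimes \cdots \otimes g_n^T$, note that $I,X,Z$ are symmetric while $Y$ is antisymmetric, and read off $E^T = (-1)^{n_Y(E)} E$. The paper in fact states this lemma without proof, treating it as an elementary observation; your write-up supplies precisely the short verification one would expect, including the remark that $E \neq 0$ is what makes the sign, and hence the parity of $n_Y(E)$, uniquely determined.
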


\begin{lemma}\label{lem:antisym}
Let $E$ be an anti-symmetric Pauli string. That is, $ n_Y(E) $ is odd. Suppose $ \ket{\Psi} $ is real. Then
\[
\mel{\Psi}{E}{\Psi}=0.
\]

\end{lemma}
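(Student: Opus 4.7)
The plan is to exploit the fact that $\mel{\Psi}{E}{\Psi}$ is a $1 \times 1$ scalar, which therefore equals its own transpose, and to combine this with the anti-symmetry of $E$ to force the matrix element to equal its own negative.

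Concretely, I would start by expanding the matrix element in components. Writing $\ket{\Psi} = \sum_i \Psi_i \ket{i}$ with $\Psi_i \in \R$ (which is exactly the hypothesis that $\ket{\Psi}$ is real), we have
\[
\mel{\Psi}{E}{\Psi} = \sum_{i,j} \Psi_i^* E_{ij} \Psi_j = \sum_{i,j} \Psi_i E_{ij} \Psi_j,
\]
where the second equality uses reality of the coefficients. Since this is a scalar, it is unchanged by swapping the dummy indices $i \leftrightarrow j$:
\[
\sum_{i,j} \Psi_i E_{ij} \Psi_j = \sum_{i,j} \Psi_j E_{ji} \Psi_i = \sum_{i,j} \Psi_i (E^T)_{ij} \Psi_j = \mel{\Psi}{E^T}{\Psi}.
\]

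Now I would invoke \Cref{lem:realsymmetricY}: because $n_Y(E)$ is odd, $E$ is anti-symmetric, so $E^T = -E$. Substituting this into the identity above gives $\mel{\Psi}{E}{\Psi} = -\mel{\Psi}{E}{\Psi}$, and hence $\mel{\Psi}{E}{\Psi} = 0$.

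There is essentially no obstacle here beyond being careful about conjugation versus transposition. The one subtlety worth flagging is that the argument genuinely needs \emph{reality} of $\ket{\Psi}$ (so that $\Psi_i^* = \Psi_i$) to convert the outer bra-product into a symmetric bilinear form; without this, the scalar identity $\mel{\Psi}{E}{\Psi} = \mel{\Psi}{E^T}{\Psi}$ would fail and the conclusion would not follow.
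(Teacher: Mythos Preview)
Your proof is correct and follows essentially the same approach as the paper: both arguments use that a scalar equals its own transpose, then invoke reality of $\ket{\Psi}$ to replace conjugation by transposition, and finally apply $E^T=-E$ to conclude $\mel{\Psi}{E}{\Psi}=-\mel{\Psi}{E}{\Psi}$. The only cosmetic difference is that you work in index notation while the paper does the same manipulation in bra-ket/matrix form.
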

\begin{pf}
We have
\begin{align*}
    \mel{\Psi}{E}{\Psi} &\explainequals{1}\mel{\Psi}{E}{\Psi}^T \\
        &=\ket{\Psi}^T E^T \bra{\Psi}^T \\
    &\explainequals{2} \ket{\Psi}^\dagger E^T \bra{\Psi}^\dagger \\
    &= \mel{\Psi}{E^T}{\Psi} \\
        &\explainequals{3}-\mel{\Psi}{E}{\Psi}
\end{align*}
Line (1) uses the fact that every number is it's own transpose while line (2) uses the realness of $ \ket{\Psi} $. Line (3) uses that $E$ is antisymmetric. 

Thus we have $ \mel{\Psi}{E}{\Psi}=-\mel{\Psi}{E}{\Psi} $ and the claim follows.
\end{pf}

\section{Main Result}

In this section we will assume we have a real  code with $X$ and $Z$ exactly transversal. The code space has dimension $ K=2 $ so the code projector can be written as
$$
    \Pi=  \dyad{0} +  \dyad{1}.
$$

\begin{lemma}\label{lem:Xweight}

\begin{align*}
    wt_X(E) &=\text{odd} \implies \mel{0}{E}{0} =0= \mel{1}{E}{1} ,  \numberthis \label{eqn:Xodd} \\
    wt_X(E) &=\text{even} \implies \mel{0}{E}{1} =0= \mel{1}{E}{0} . \numberthis \label{eqn:Xeven} 
\end{align*}
\end{lemma}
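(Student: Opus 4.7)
The plan is to exploit the exact transversality of $Z$. Since $Z^{\otimes n}$ acts as logical $Z$ on the codespace, it fixes $\ket{0}$ and negates $\ket{1}$. At the operator level, $Z$ anticommutes with $X$ and $Y$ and commutes with $I$ and $Z$, so factor by factor one obtains the conjugation identity $Z^{\otimes n} E\, Z^{\otimes n} = (-1)^{wt_X(E)} E$.

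Armed with this identity, I would sandwich $Z^{\otimes n}$ around $E$ in each of the four matrix elements. For the diagonal ones $\mel{0}{E}{0}$ and $\mel{1}{E}{1}$, the two insertions of $Z^{\otimes n}$ contribute a combined sign of $+1$ on the codewords (either both trivial for $\ket{0}$, or two cancelling minuses for $\ket{1}$), while the conjugation contributes $(-1)^{wt_X(E)}$. The resulting identity $\mel{0}{E}{0} = (-1)^{wt_X(E)}\mel{0}{E}{0}$ kills the matrix element whenever $wt_X(E)$ is odd, and the same reasoning handles $\mel{1}{E}{1}$; together these give \eqref{eqn:Xodd}.

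For the off-diagonal elements $\mel{0}{E}{1}$ and $\mel{1}{E}{0}$, the two $Z^{\otimes n}$ insertions now contribute a net sign of $-1$, so the same sandwich yields $\mel{0}{E}{1} = -(-1)^{wt_X(E)}\mel{0}{E}{1}$. This forces the element to vanish when $wt_X(E)$ is even, giving \eqref{eqn:Xeven}.

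There is no serious obstacle beyond careful sign bookkeeping. It is worth noting that this lemma only needs exact transversality of $Z$; neither realness of the code nor transversality of $X$ plays any role here.
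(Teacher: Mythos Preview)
Your proof is correct and rests on the same idea as the paper's: both use only the exact transversality of $Z$. The paper phrases it combinatorially---$\ket{0}$ is supported on even Hamming-weight computational basis kets and $\ket{1}$ on odd ones, and $E$ flips this parity precisely when $wt_X(E)$ is odd---whereas your conjugation identity $Z^{\otimes n} E\, Z^{\otimes n}=(-1)^{wt_X(E)}E$ together with $Z^{\otimes n}\ket{b}=(-1)^{b}\ket{b}$ is the operator-level encoding of exactly the same parity bookkeeping.
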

\begin{pf}
  First notice that because $Z$ is exactly transversal then in the expansion of $\ket{0}$ with respect to the computational basis only even weight kets will have nonzero coefficients. Similarly, expanding $\ket{1}$ in the computational basis only odd weight kets will have nonzero coefficients. When an error $E$ has $wt_X(E)$ odd, then $E$ changes the parity of whatever basis ket it acts on, whereas if $wt_X(E)$ is even, then $E$ preserves parity. A linear combination of even weight kets is always orthogonal to a linear combination of odd weight kets. The claim follows.
\end{pf}

In a similar vein we have the following.  
\begin{lemma}\label{lem:Zweight}

\begin{align*}
    wt_Z(E) &=\text{even} \implies \mel{0}{E}{0} =\mel{1}{E}{1}  .
\end{align*}
\end{lemma}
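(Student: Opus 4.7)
The plan is to exploit the exact transversality of $X$, which gives $X^{\otimes n}\ket{0} = \ket{1}$ and $X^{\otimes n}\ket{1} = \ket{0}$ (using $X^2 = I$). First I would rewrite
\[
\mel{1}{E}{1} = \mel{0}{X^{\otimes n} E X^{\otimes n}}{0},
\]
so the problem reduces to computing how conjugation by $X^{\otimes n}$ acts on the Pauli string $E$.

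Next I would compute this conjugation factor by factor. Since $X$ commutes with $I$ and $X$ but anticommutes with $Y$ and $Z$, each tensor factor $g_i$ of $E$ picks up a sign of $-1$ precisely when $g_i \in \{Y, Z\}$. Therefore
\[
X^{\otimes n} E X^{\otimes n} = (-1)^{n_Y(E) + n_Z(E)} E.
\]
By the definition of $Z$-weight, the exponent is exactly $wt_Z(E)$, so conjugation produces $(-1)^{wt_Z(E)} E$.

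Finally, when $wt_Z(E)$ is even this sign is $+1$, yielding $\mel{1}{E}{1} = \mel{0}{E}{0}$, which is the claim. There is no real obstacle here: the only slightly subtle point is recognizing that the "Pauli operators anticommuting with $X$" are precisely those counted by $wt_Z$, which follows immediately from the paper's definition $wt_Z(E) = n_Y(E) + n_Z(E)$. Note that realness of the code is not needed for this particular identity; only the exact transversality of $X$ is used.
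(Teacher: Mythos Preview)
Your argument is correct and is essentially identical to the paper's proof: both use $\ket{1}=X^{\otimes n}\ket{0}$ to write $\mel{1}{E}{1}=\mel{0}{X^{\otimes n}EX^{\otimes n}}{0}$ and then invoke the conjugation identity $X^{\otimes n}EX^{\otimes n}=(-1)^{wt_Z(E)}E$. You are simply more explicit than the paper in deriving that conjugation sign as $(-1)^{n_Y+n_Z}$, and your observation that realness is unused here is accurate.
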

\begin{proof} We can write
    \begin{align*}
        \mel{1}{E}{1}&=\mel{0}{XEX}{0}\\
        &\explainequals{1}\mel{0}{(-1)^{wt_Z(E)}EXX}{0}\\
        &=(-1)^{wt_Z(E)}\mel{0}{E}{0}\\
        &=\mel{0}{E}{0}.
    \end{align*}
    Lines $ (1) $ uses $XEX = (-1)^{wt_Z(E)} E$.
\end{proof}

\begin{lemma}\label{lem:Zweight}
\begin{align*}
    n_Y \text{ even and } wt_Z(E) \text{ odd} &\implies \mel{0}{E}{1} = 0, \\
n_Y \text{ odd and } wt_Z(E) \text{ even} &\implies \mel{0}{E}{1} = 0.
\end{align*}

\end{lemma}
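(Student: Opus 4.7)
The plan is to convert the off-diagonal element $\mel{0}{E}{1}$ into a diagonal one $\mel{0}{F}{0}$ and then reuse the transpose trick from Lemma \ref{lem:antisym}. Fix the logical basis so that $\ket{0}$ is a real unit codeword (possible because $\Pi^{*}=\Pi$ and logical $Z = Z^{\otimes n}$ is real symmetric, so its $\pm 1$ eigenspaces inside the codespace admit real eigenvectors; by Lemma \ref{lem:nodd}, $X^{\otimes n}$ anticommutes with $Z^{\otimes n}$ so each eigenspace is one-dimensional). Set $\ket{1} \defeq X^{\otimes n}\ket{0}$; exact transversality of $X$ makes this the logical $\ket{1}$, and it is real since $X^{\otimes n}$ is. Then
\[
\mel{0}{E}{1} \;=\; \mel{0}{F}{0}, \qquad F \defeq E\,X^{\otimes n}.
\]

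The next step is to check that $F$ is antisymmetric under either hypothesis. Using $(X^{\otimes n})^{T}=X^{\otimes n}$, the identity $E^{T}=(-1)^{n_{Y}(E)}E$ from Lemma \ref{lem:realsymmetricY}, and the commutation relation $X^{\otimes n}E=(-1)^{wt_{Z}(E)}E\,X^{\otimes n}$ (the same identity invoked in the proof of the previous lemma, applied tensor-factorwise),
\[
F^{T} \;=\; X^{\otimes n}E^{T} \;=\; (-1)^{\,n_{Y}(E)+wt_{Z}(E)}\,F.
\]
Both hypotheses of the statement --- ``$n_{Y}$ even and $wt_{Z}$ odd'' or ``$n_{Y}$ odd and $wt_{Z}$ even'' --- force the exponent $n_{Y}(E)+wt_{Z}(E)$ to be odd, so $F^{T}=-F$.

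Finally, the calculation in the proof of Lemma \ref{lem:antisym} uses only antisymmetry of the matrix and realness of the state, so it applies verbatim to $F$ and $\ket{0}$, yielding $\mel{0}{F}{0}=0$ and therefore $\mel{0}{E}{1}=0$. The only genuinely delicate step is the initial setup (arranging that $\ket{0},\ket{1}$ are both real and related by $X^{\otimes n}$); after that, the antisymmetry test is just parity arithmetic on $n_{Y}$ and $wt_{Z}$.
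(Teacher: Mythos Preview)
Your proof is correct and follows essentially the same route as the paper: both convert $\mel{0}{E}{1}$ to $\mel{0}{E\,X^{\otimes n}}{0}$ via exact transversality of $X$, then combine the commutation relation $X^{\otimes n}E=(-1)^{wt_Z(E)}E\,X^{\otimes n}$ with $E^{T}=(-1)^{n_Y(E)}E$ and realness of the codewords to force the expression to equal its own negative. The only cosmetic difference is that you package the last step by naming $F=E\,X^{\otimes n}$, checking $F^{T}=-F$, and invoking Lemma~\ref{lem:antisym}, whereas the paper carries out the transpose manipulation inline on $\mel{0}{E}{1}$ directly.
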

\begin{pf}
Recall that $ n_Y(E) $ even is equivalent to $ E $ symmetric and $ n_Y(E) $ odd is equivalent to $ E $ antisymmetric. 
We have
\begin{align*}
    \mel{0}{E}{1} &=\mel{0}{EX}{0} \\
    &= (-1)^{wt_Z(E)}\mel{0}{XE}{0} \\
    &=(-1)^{wt_Z(E)}\mel{1}{E}{0} \\
        &\explainequals{1}(-1)^{wt_Z(E)}\mel{1}{E}{0}^T \\
        &=(-1)^{wt_Z(E)}\ket{0}^T E^T \bra{1}^T \\
    &\explainequals{2} (-1)^{wt_Z(E)}\ket{0}^\dagger E^T \bra{1}^\dagger \\
    &= (-1)^{wt_Z(E)}\mel{0}{E^T}{1}.
\end{align*}
In line (1), we use that each number is it's own transpose while line (2) uses the realness of the codewords.

If $E$ is symmetric then $E^T = E$. It follows in this case that if $wt_Z(E)$ is odd then $\mel{0}{E}{1} = 0$.

If $E$ is anti-symmetric then $E^T =-E$. In this case $\mel{0}{E}{1} = (-1)^{1+ wt_Z(E)} \mel{0}{E}{1}$. It follows that if $wt_Z(E)$ is even then $\mel{0}{E}{1} = 0$.
\end{pf} 
 
Using these lemmas, we can rewrite the $A_i$. 
\begin{lemma}\label{lem:Acoeffs}
\[
     A_i  = \sum_{ \substack{E \in \mathcal{E}_i: \\ wt_Z(E) \text{ even} \\ wt_X(E) \text{ even} }} |\mel{0}{E}{0}|^2.  
\]
\end{lemma}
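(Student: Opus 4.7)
The plan is to start from the definition
\[
A_i = \frac{1}{K^2} \sum_{E \in \mathcal{E}_i} \Tr{E \Pi}^2
\]
with $K = 2$ and $\Pi = \dyad{0} + \dyad{1}$, so that $\Tr{E\Pi} = \mel{0}{E}{0} + \mel{1}{E}{1}$ and
\[
A_i = \tfrac{1}{4}\sum_{E \in \mathcal{E}_i}\bigl(\mel{0}{E}{0} + \mel{1}{E}{1}\bigr)^2.
\]
The remaining task is to identify which Pauli strings $E$ actually contribute and then clean up the summand.

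First I would apply Lemma \ref{lem:Xweight}: whenever $wt_X(E)$ is odd, both diagonal matrix elements vanish, so only $wt_X(E)$ even survives. Next, exact transversality of $X$ gives $X^{\otimes n}\ket{0} = \ket{1}$, and the Pauli commutation identity $X^{\otimes n} E X^{\otimes n} = (-1)^{wt_Z(E)}E$ (already used in the proof of Lemma \ref{lem:Zweight}) yields
\[
\mel{1}{E}{1} = (-1)^{wt_Z(E)}\mel{0}{E}{0}.
\]
Thus for $wt_Z(E)$ odd the two matrix elements cancel, while for $wt_Z(E)$ even they add to $2\mel{0}{E}{0}$. The sum collapses to
\[
A_i = \tfrac{1}{4}\sum_{\substack{E \in \mathcal{E}_i:\\ wt_X(E)\text{ even}\\ wt_Z(E)\text{ even}}} \bigl(2\mel{0}{E}{0}\bigr)^2 = \sum_{\substack{E \in \mathcal{E}_i:\\ wt_X(E)\text{ even}\\ wt_Z(E)\text{ even}}} \mel{0}{E}{0}^2.
\]

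The final step is to upgrade $\mel{0}{E}{0}^2$ to $\lvert\mel{0}{E}{0}\rvert^2$, and this is where Lemma \ref{lem:antisym} intervenes. I would split by the parity of $n_Y(E)$: when $n_Y(E)$ is even, the overall $i^{n_Y}$ factor from the $Y$ tensor factors is real, so $E$ has real matrix entries, and since $\ket{0}$ is real by the realness of the code, $\mel{0}{E}{0} \in \mathbb{R}$; when $n_Y(E)$ is odd, $E$ is antisymmetric by Lemma \ref{lem:realsymmetricY} and Lemma \ref{lem:antisym} forces $\mel{0}{E}{0}=0$. In both cases $\mel{0}{E}{0}^2 = \lvert\mel{0}{E}{0}\rvert^2$, and the claim follows.

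The only genuinely subtle point is this last realness check, since individual $Y$ factors have imaginary entries; the earlier reduction of the sum is straightforward bookkeeping built from Lemma \ref{lem:Xweight} and the $X^{\otimes n}$-conjugation identity.
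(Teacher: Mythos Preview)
Your proof is correct and follows essentially the same route as the paper's: expand $\Tr{E\Pi}$ in terms of $\mel{0}{E}{0}$ and $\mel{1}{E}{1}$, use the conjugation identity $X^{\otimes n}EX^{\otimes n}=(-1)^{wt_Z(E)}E$ to handle the $wt_Z$ parity, and invoke \cref{lem:Xweight} for the $wt_X$ parity. Your final step upgrading $\mel{0}{E}{0}^2$ to $|\mel{0}{E}{0}|^2$ via \cref{lem:antisym} is unnecessary, since every Pauli string $E$ is Hermitian and hence $\mel{0}{E}{0}$ is automatically real; the paper simply writes the absolute value without further comment.
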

\begin{pf}
Start with the $A$-type weight enumerators. Using the fact that $XEX = (-1)^{wt_Z(E)} E$ we have the following:
\begin{align*}
    \Tr{E \Pi} &=  \Tr{ E \qty( \dyad{0} +  \dyad{1} )}\\
    &=  \Tr{ E \qty( \dyad{0} + X \dyad{0} X)}\\
    &= \Tr{ E \dyad{0}} + \Tr{E X \dyad{0} X} \\
    &= \mel{0}{E}{0} + \Tr{X E X \dyad{0} }  \\
    &=   \mel{0}{E}{0} + (-1)^{wt_Z(E)} \Tr{E \dyad{0} }  \\
    &= \qty( 1 + (-1)^{wt_Z(E)})\mel{0}{E}{0}.
\end{align*}
If $wt_Z(E)$ is odd then this term cancels and if $wt_Z(E)$ is even then we get a factor of $2$. On the other hand, from \cref{lem:Xweight} we know that when $wt_X(E)$ is odd then $\mel{0}{E}{0} = 0$ and so we must also have that $wt_X(E)$ is even. It follows that 
\[
     A_i = \frac{1}{4} \sum_{E \in \mathcal{E}_i} |\Tr{E \Pi}|^2  = \sum_{ \substack{E \in \mathcal{E}_i: \\ wt_Z(E) \text{ even} \\ wt_X(E) \text{ even} }} |\mel{0}{E}{0}|^2.  
\]
\end{pf}

We now present our two main results.

\begin{theorem}\label{thm:Aodd} For a real code with $ X,Z $ 
 exactly transversal, $A_i = 0$ for all odd $i$. 
\end{theorem}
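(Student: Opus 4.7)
The plan is to combine \cref{lem:Acoeffs}, \cref{lem:wtY}, \cref{lem:realsymmetricY}, and \cref{lem:antisym}, in that order. Every preceding piece has been set up for exactly this argument, so the proof should amount to tracking parities.

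First I would invoke \cref{lem:Acoeffs} to reduce $A_i$ to
\[
A_i = \sum_{\substack{E \in \mathcal{E}_i : \\ wt_Z(E) \text{ even} \\ wt_X(E) \text{ even}}} |\mel{0}{E}{0}|^2.
\]
Thus every Pauli string $E$ that can contribute to $A_i$ already has $wt_X(E)$ and $wt_Z(E)$ both even.

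Next I would apply \cref{lem:wtY}: for such an $E$ with $wt(E)=i$, the count $n_Y(E) = wt_X(E) + wt_Z(E) - i$ has the parity of $-i$. When $i$ is odd, $n_Y(E)$ is therefore odd, which by \cref{lem:realsymmetricY} forces $E$ to be antisymmetric. Since the code is real, the logical ket $\ket{0}$ can be chosen real, and then \cref{lem:antisym} gives $\mel{0}{E}{0}=0$. Every term in the sum vanishes, so $A_i=0$.

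The one place that deserves a moment of care is the claim that $\ket{0}$ can simultaneously be taken real and be the $+1$ eigenvector of logical $Z$. This is where the two standing assumptions cooperate: the code projector $\Pi$ is real by hypothesis and $Z^{\otimes n}$ (which implements logical $Z$ by exact transversality) is itself a real matrix, so $\Pi$ and $Z^{\otimes n}$ admit a simultaneous real eigenbasis. Beyond this bookkeeping remark the proof is a one-line composition of the earlier lemmas, so I do not expect a genuine obstacle.
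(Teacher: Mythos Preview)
Your proposal is correct and follows exactly the paper's argument: invoke \cref{lem:Acoeffs}, then use \cref{lem:wtY} to deduce $n_Y$ odd, hence $E$ antisymmetric, hence $\mel{0}{E}{0}=0$ by \cref{lem:antisym}. Your extra paragraph justifying that $\ket{0}$ can be chosen simultaneously real and a logical-$Z$ eigenvector is a nice piece of bookkeeping that the paper leaves implicit.
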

\begin{proof}
In \cref{lem:Acoeffs} we proved the following:
    \[
     A_i  = \sum_{ \substack{E \in \mathcal{E}_i: \\ wt_Z(E) \text{ even} \\ wt_X(E) \text{ even} }} |\mel{0}{E}{0}|^2.  
\]
 $wt_X(E)$ and $wt_Z(E)$ are even for all terms in the sum. Since $wt(E) = i$ is assumed odd then from \cref{lem:wtY} we have that $n_Y(E)$ is odd. This implies that $E$ is antisymmetric and so by \cref{lem:antisym} $ \mel{0}{E}{0} = 0 $. The claim follows.
\end{proof}

\begin{theorem}\label{thm:main}
For a real code with $ X,Z $ exactly transversal, $A_i = B_i$ for all even $i$.
\end{theorem}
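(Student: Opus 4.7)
The strategy is to expand $B_i$ into codeword matrix elements of $E$, and then carry out a parity case analysis on $wt_X(E)$ and $wt_Z(E)$. The $(\text{even}, \text{even})$ quadrant will reproduce $A_i$ exactly via \cref{lem:Acoeffs}, while the three remaining quadrants will all vanish by the lemmas already proved.

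First I would write $\Tr(E\Pi E \Pi) = \sum_{c,c'\in \{0,1\}} |\mel{c}{E}{c'}|^2$, using that each Pauli $E$ is Hermitian. Two symmetries then collapse the four terms to two: Hermiticity yields $|\mel{1}{E}{0}|^2 = |\mel{0}{E}{1}|^2$, and the unconditional identity $\mel{1}{E}{1} = (-1)^{wt_Z(E)}\mel{0}{E}{0}$ derived inside the proof of the first \cref{lem:Zweight} yields $|\mel{1}{E}{1}|^2 = |\mel{0}{E}{0}|^2$. Therefore
\begin{equation*}
    B_i \;=\; \sum_{E \in \mathcal{E}_i} \bigl( |\mel{0}{E}{0}|^2 + |\mel{0}{E}{1}|^2 \bigr).
\end{equation*}

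Next I would split the sum by the parities of $wt_X(E)$ and $wt_Z(E)$. Since $i$ is even, \cref{lem:wtY} forces $n_Y(E) \equiv wt_X(E) + wt_Z(E) \pmod 2$. In the $(\text{even},\text{even})$ quadrant, \cref{lem:Xweight} kills $\mel{0}{E}{1}$, and the surviving $|\mel{0}{E}{0}|^2$ is precisely the summand of $A_i$ in \cref{lem:Acoeffs}. In the $(\text{odd},\text{even})$ quadrant, \cref{lem:Xweight} kills $\mel{0}{E}{0}$, and the second \cref{lem:Zweight} (applied with $n_Y$ odd, $wt_Z$ even) kills $\mel{0}{E}{1}$. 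In the $(\text{even},\text{odd})$ quadrant, $\mel{0}{E}{1}$ vanishes by \cref{lem:Xweight}, while $n_Y$ is odd so $E$ is antisymmetric by \cref{lem:realsymmetricY}, making $\mel{0}{E}{0} = 0$ by \cref{lem:antisym}. In the $(\text{odd},\text{odd})$ quadrant, \cref{lem:Xweight} kills $\mel{0}{E}{0}$ and the second \cref{lem:Zweight} (now with $n_Y$ even, $wt_Z$ odd) kills $\mel{0}{E}{1}$. Three quadrants thus contribute zero and the fourth matches $A_i$, yielding $B_i = A_i$.

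The main obstacle is essentially parity bookkeeping, since every annihilation is handled by a lemma already in the paper. The conceptual content is that the even-$i$ assumption, via \cref{lem:wtY}, synchronizes the parity of $n_Y$ with that of $wt_X + wt_Z$, and this synchronization is exactly what is needed so that whenever \cref{lem:Xweight} fails to kill a matrix element, either \cref{lem:antisym} or the second \cref{lem:Zweight} closes the gap. The one step that is not purely bookkeeping is the initial reduction of $B_i$ from four matrix-element terms per $E$ to two; that step is where Hermiticity of Paulis and the $X$-transversal identity $\mel{1}{E}{1} = (-1)^{wt_Z(E)}\mel{0}{E}{0}$ do the real work.
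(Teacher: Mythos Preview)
Your proposal is correct and is essentially the paper's own proof: both reduce $B_i$ to $\sum_{E\in\mathcal{E}_i}\bigl(|\mel{0}{E}{0}|^2+|\mel{0}{E}{1}|^2\bigr)$ and then perform the same parity case analysis via \cref{lem:Xweight}, \cref{lem:antisym}, and the second \cref{lem:Zweight}. The only cosmetic difference is that the paper groups the $wt_X$-odd contributions into a single term $D_i$ (handled by observing that $n_Y$ and $wt_Z$ have opposite parity) whereas you split that into your $(\text{odd},\text{even})$ and $(\text{odd},\text{odd})$ quadrants and invoke the second \cref{lem:Zweight} in each case separately.
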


\begin{proof} 

The $A$ enumerators are described in \cref{lem:Acoeffs}. Let's rewrite the $B$ coefficients as well:
\begin{align*}
    &\Tr{E\Pi E \Pi} \\ 
    =& \Tr{\; E\; \qty(\dyad{0} + X \dyad{0} X) \; E \; \qty(\dyad{0} + X \dyad{0} X) \;} \\
    =& \Tr{E \dyad{0} E \dyad{0} } + \Tr{ EX \dyad{0} X E \dyad{0}} \\ 
    &  \qquad + \Tr{ E \dyad{0} E X \dyad{0} X} \\
    & \qquad + \Tr{EX \dyad{0} X E X \dyad{0} X} \\
   =& \mel{0}{E}{0}\mel{0}{E}{0} + \mel{0}{E}{1}  \mel{1}{E}{0} \\
    & \qquad + \mel{1}{E}{0} \mel{0}{E}{1} +\mel{1}{E}{1}\mel{1}{E}{1} \\
     \explainequals{1}& \mel{0}{E}{0}\mel{0}{E}{0} + \mel{0}{E}{1}  \mel{1}{E}{0} \\
    & \qquad + \mel{1}{E}{0} \mel{0}{E}{1} +\mel{0}{E}{0}\mel{0}{E}{0} \\
     \explainequals{2}& 2|\mel{0}{E}{0}|^2 + 2|\mel{0}{E}{1}|^2.
\end{align*}
 Line (1) uses $XEX = (-1)^{wt_Z(E)} E$ twice. Line (2) uses hermiticity of $ E $, $ E=E^\dagger $.

   
Thus the $B$ weight enumerators are
\begin{align*}
  B_i &= \frac{1}{2} \sum_{E \in \mathcal{E}_i} \Tr{E \Pi E \Pi} \\
    &= \sum_{E \in \mathcal{E}_i} |\mel{0}{E}{0}|^2 + |\mel{0}{E}{1}|^2 \\
    &\explainequals{4}  \sum_{ \substack{E \in \mathcal{E}_i : \\ wt_X(E) = \text{even} }}  |\mel{0}{E}{0}|^2   +  \sum_{ \substack{E \in \mathcal{E}_i : \\ wt_X(E) = \text{odd} }}  |\mel{0}{E}{1}|^2   \\
 &=  \sum_{ \substack{E \in \mathcal{E}_i : \\ wt_X(E) = \text{even} \\ wt_Z(E) = \text{even} }}  |\mel{0}{E}{0}|^2   + \underbrace{  \sum_{ \substack{E \in \mathcal{E}_i : \\ wt_X(E) = \text{even} \\ wt_Z(E) = \text{ odd} }}  |\mel{0}{E}{0}|^2  }_{C_i} \\
 & \qquad \qquad  + \underbrace{  \sum_{ \substack{E \in \mathcal{E}_i : \\ wt_X(E) = \text{odd} }} |\mel{0}{E}{1}|^2  }_{D_i}. 
\end{align*}
In line (4) we use \cref{lem:Xweight}. To be sure, we have
\[
     B_i = A_i + \qty( C_i + D_i ).
\]
However, the next two lemmas show that $C_i = D_i = 0$ when $i$ is even.

\begin{lemma}
$C_i = 0$ for all even $i$.
\end{lemma}
\begin{pf}
For every term in the sum $wt_X(E)$ is even and $wt_Z(E)$ is odd. Since $wt(E) = i$ is even then from \cref{lem:wtY} we have that $n_Y(E)$ is odd. This implies that $E$ is anti-symmetric and so, by \cref{lem:antisym} $, \mel{0}{E}{0} = 0 $. The claim follows. 
\end{pf}

\begin{lemma}
$D_i = 0$ for all $i$ even.
\end{lemma}
\begin{pf}
    $wt_X(E)$ is odd for every term in the sum. Since $wt(E) = i$ is even then the difference $wt_X - wt$ is odd. Combining this with \cref{lem:wtY} we conclude that $n_Y$ and $wt_Z$ have opposite parity. So by \cref{lem:Zweight} we have that $\mel{0}{E}{1} =0$ for every term in the sum. The claim follows.
\end{pf}

This completes the proof of the theorem.
\end{proof} 

We immediately have the following consequence. 
\begin{corollary} A real code with $X$ and $Z$ exactly transversal must have odd distance $d$.
\end{corollary}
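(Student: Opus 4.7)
The plan is a one-line argument by contradiction. I would suppose the distance $d$ is even. Lemma \ref{lem:WtEnumDist} not only gives $A_i = B_i$ for all $i < d$, but the ``iff'' in its statement, applied one step further, will force $A_d \ne B_d$ — otherwise the same criterion would promote the distance to $d+1$ or larger. Hence, under this assumption, one would have $A_d \ne B_d$ at an even index $i = d$. But Theorem \ref{thm:main} asserts exactly that $A_i = B_i$ for every even $i$, and applying this at $i = d$ yields $A_d = B_d$, a contradiction. Therefore $d$ must be odd.

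The only subtlety I expect is extracting the strict inequality $A_d \ne B_d$ from Lemma \ref{lem:WtEnumDist}; this amounts to the standard reading that $d$ is the smallest weight at which the $A$ and $B$ enumerators disagree. There is no computational obstacle here, since Theorem \ref{thm:main} does all the heavy lifting. It is worth noting that Theorem \ref{thm:Aodd} is not strictly required for this corollary, though it reinforces the picture from the opposite side: at odd indices the $A$-enumerator already vanishes on its own, so any disagreement between $A$ and $B$ at or below the distance must sit at an odd weight, which is exactly what odd distance demands.
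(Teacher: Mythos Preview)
Your argument is correct and matches the paper's own reasoning: the paper simply records the corollary as an ``immediate consequence'' of \cref{thm:main} together with \cref{lem:WtEnumDist}, which is exactly the contradiction you spell out. Your remark that \cref{thm:Aodd} is not strictly needed here is also accurate, and your handling of the one subtlety---reading \cref{lem:WtEnumDist} as saying $d$ is the least index with $A_d\neq B_d$---is the standard and intended interpretation.
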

In other words, every error detecting code is actually error correcting.

\section{Applying the main result to Stabilizer Codes}

\begin{lemma}\label{realStabilizerCode}
A stabilizer code is real if and only if $ n_Y $ is even for all the stabilizer generators. 
    
\end{lemma}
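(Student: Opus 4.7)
The plan is to write the stabilizer-code projector as $\Pi = \tfrac{1}{|S|} \sum_{s \in S} s$, apply complex conjugation, and read off the realness condition using linear independence of Paulis. Writing each $s \in S$ as $\epsilon_s P_s$ with $\epsilon_s \in \{\pm 1\}$ and $P_s$ a pure Pauli string, and using that $Y^* = -Y$ while $I$, $X$, $Z$ are real, one has $s^* = (-1)^{n_Y(s)} s$. Because $-I \notin S$, the $P_s$ are all distinct Pauli strings and hence linearly independent over $\mathbb{C}$; it follows that $\Pi^* = \Pi$ if and only if $n_Y(s)$ is even for \emph{every} $s \in S$, not merely for the generators.

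This reduces the lemma to showing that the parity of $n_Y$ is preserved under products inside $S$. This is the main obstacle, because $n_Y$ is \emph{not} additive mod $2$ on general Pauli products; for example $X \cdot Z = -iY$ converts no $Y$'s into one. The rescue is that stabilizer elements pairwise commute. I would pass to the symplectic representation: to each Pauli string $P = g_1 \otimes \cdots \otimes g_n$ assign a pair $(a_P, b_P) \in \mathbb{F}_2^n \oplus \mathbb{F}_2^n$ with $(a_P)_i = 1$ iff $g_i \in \{X,Y\}$ and $(b_P)_i = 1$ iff $g_i \in \{Y,Z\}$. Then $n_Y(P) \equiv a_P \cdot b_P \pmod{2}$, the product $PQ$ has symplectic data $(a_P + a_Q,\, b_P + b_Q)$, and commutativity is exactly $a_P \cdot b_Q + a_Q \cdot b_P \equiv 0 \pmod{2}$. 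A short bilinear expansion then gives $n_Y(PQ) \equiv n_Y(P) + n_Y(Q) \pmod{2}$ whenever $P$ and $Q$ commute.

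With additivity in hand the equivalence is immediate. If every generator has even $n_Y$, induction on the length of the product shows that every element of $S$ has even $n_Y$, so $\Pi^* = \Pi$. Conversely, if $\Pi^* = \Pi$, the linear-independence argument forces every $s \in S$, and in particular every generator, to have even $n_Y$.
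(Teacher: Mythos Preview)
Your argument is correct and follows the same skeleton as the paper: write $\Pi$ as the uniform sum over $S$, conjugate, and use $Y^*=-Y$ to see that realness of $\Pi$ is equivalent to every stabilizer element having even $n_Y$. You are in fact more careful than the paper on two points. First, you make explicit the linear-independence step needed to pass from ``$\sum_s s$ is real'' to ``each $s$ is real''; the paper simply asserts this. Second, the paper's ``The result follows'' hides exactly the reduction from all elements to the generators that you work hard to justify via the symplectic form.

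That symplectic computation is correct, but overkill here. Once you have established that $s\in S$ is a real matrix iff $n_Y(s)$ is even, the reduction is immediate from the trivial fact that a product of real matrices is real: if every generator is real then every word in the generators is real. This is presumably the one-line argument the paper has in mind, and it sidesteps entirely the issue you flag about $n_Y$ not being additive mod $2$ in general. Your route has the virtue of proving the sharper statement that $n_Y$ \emph{is} additive mod $2$ on commuting Paulis, which is a nice standalone fact, but it is not needed for the lemma.
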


\begin{proof}
    Recall that a code is real if and only if the code projector is real, $ \Pi=\Pi^* $. For a stabilizer code, $ \Pi $ is a uniform sum over the stabilizer. So a code is real if and only if all Pauli gates in the stabilizer are real. Since $ I,X,Z $ are real while $ Y^*=-Y $ a Pauli stabilizer $ E $ is real if and only if $ wt_Y(E) $ is even. The result follows.
\end{proof}

\begin{lemma}\label{lem:XexactZexactStabilizerCode}
An $[[n,1,d]]$ stabilizer code has $ X $ and $ Z $ exactly transversal (or at least is equivalent to such a stabilizer code) if and only if $ n $ is odd and all stabilizer generators have $ wt_X $ even and $ wt_Z $ even. 
    
\end{lemma}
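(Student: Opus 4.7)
The plan is to reduce everything to a single commutation calculation. On each factor, $X$ commutes with $\{I,X\}$ and anticommutes with $\{Y,Z\}$, so for any Pauli string $g$,
\[
X^{\otimes n}\, g = (-1)^{n_Y(g)+n_Z(g)}\, g\, X^{\otimes n} = (-1)^{wt_Z(g)}\, g\, X^{\otimes n},
\]
and symmetrically $Z^{\otimes n}\, g = (-1)^{wt_X(g)}\, g\, Z^{\otimes n}$. Thus $X^{\otimes n}$ commutes with $g$ iff $wt_Z(g)$ is even, and $Z^{\otimes n}$ commutes with $g$ iff $wt_X(g)$ is even. Both directions of the lemma will follow by applying this observation to the stabilizer generators.

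For the forward direction, assume $X$ and $Z$ are exactly transversal. Then $n$ is odd by \cref{lem:nodd}, and $X^{\otimes n}$ preserves the code space (since it implements $\bar X$), so it must commute with every stabilizer generator $g$; the identity above then forces $wt_Z(g)$ to be even for every generator. The same argument applied to $Z^{\otimes n}$ yields $wt_X(g)$ even for every generator.

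For the reverse direction, assume $n$ is odd and every generator $g$ has both $wt_X(g)$ and $wt_Z(g)$ even. Then $X^{\otimes n}$ and $Z^{\otimes n}$ lie in the normalizer of the stabilizer group. Because $n$ is odd they anticommute with each other, so neither can belong to the stabilizer itself (stabilizer elements commute pairwise). Each therefore acts nontrivially on the $2$-dimensional code space as a traceless Hermitian involution, and the two actions anticommute on the code space.

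The one step I expect to be slightly subtle is passing from ``$X^{\otimes n}$ and $Z^{\otimes n}$ are nontrivial anticommuting logical Pauli operators'' to ``they implement exactly $\bar X$ and $\bar Z$.'' Any pair of anticommuting traceless Hermitian involutions on a $2$D space is unitarily conjugate to the standard pair $(X,Z)$, so we can always pick a logical basis in which $X^{\otimes n}$ implements $\bar X$ and $Z^{\otimes n}$ implements $\bar Z$. This amounts to relabeling which code vector is called $\ket{0_L}$ versus $\ket{1_L}$; it leaves the code subspace and the stabilizer group intact, which is exactly what the parenthetical ``equivalent to such a stabilizer code'' is intended to absorb.
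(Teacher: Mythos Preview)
Your argument is correct and tracks the paper's proof closely: both directions hinge on the commutation identities $X^{\otimes n}g=(-1)^{wt_Z(g)}gX^{\otimes n}$ and $Z^{\otimes n}g=(-1)^{wt_X(g)}gZ^{\otimes n}$, and the reverse direction places $X^{\otimes n},Z^{\otimes n}$ in $N(S)\setminus S$ by the anticommutation forced by $n$ odd.

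One wording issue: the phrase ``relabeling which code vector is called $\ket{0_L}$ versus $\ket{1_L}$'' is not right. A swap of the two logical kets only effects $(X,Z)\mapsto(X,-Z)$; in general you need the full unitary change of logical basis you already invoked when you said ``unitarily conjugate to the standard pair $(X,Z)$.'' For instance, in the Shor code $X^{\otimes 9}$ acts as logical $Z$ and $Z^{\otimes 9}$ as logical $X$, so the required basis change on the code space is a Hadamard, not a swap. The paper handles this last step differently: rather than re-choosing the logical basis inside the same code space, it applies a transversal single-qubit Clifford $C^{\otimes n}$ to pass to an equivalent stabilizer code in which $X$ and $Z$ are exactly transversal (e.g.\ $H^{\otimes 9}$ for Shor). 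Your approach keeps the code space fixed and moves the logical frame; the paper keeps the logical frame fixed and moves the code space. Both are valid ways to discharge the parenthetical ``equivalent to such a stabilizer code.''
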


\begin{proof}
    Suppose $ X $ and $ Z $ are exactly transversal. Then by \cref{lem:nodd} we must have that $ n $ is odd. Furthermore, all stabilizer generators $ E $ commute with $ X^{\otimes n} $, so $ wt_Z(E) $ is even, and commute with $ Z^{\otimes n} $, so $ wt_X(E) $ is even.

    Now consider the reverse implication. Since all stabilizer generators are assumed to have $ wt_X $ even and $ wt_Z $ even then $ X^{\otimes n} $ and $ Z^{\otimes n} $ must be in the normalizer $ N(S) $. Since $ n $ is odd then $ X^{\otimes n} $ and $ Z^{\otimes n} $ anticommute and so $ X^{\otimes n} , Z^{\otimes n} \in N(S) \setminus S $  are not in the stabilizer and  they generate all the logical operations in $ N(S)/S $.  In this case, there will always exist a Clifford gate $ C $ such that $ C^{ \otimes n} $ gives an equivalent stabilizer code with $ X $ and $ Z $ exactly transversal. 
\end{proof}

For an example of the situation described at the end of \cref{lem:XexactZexactStabilizerCode}, recall that, in the most common version of the Shor code, $ X^{\otimes 9} $ implements logical $ Z $ and $ Z^{\otimes 9} $ implements logical $ X $. Then applying $ H^{ \otimes 9} $, where $ H= \frac{1}{\sqrt{2}} \begin{bmatrix} 1 & 1 \\ 1 & -1 \end{bmatrix} $ is the Hadamard gate, yields an equivalent stabilizer code with $ X $and $ Z $ exactly transversal.

\begin{lemma}\label{realAndExactStabilizerCode}
A stabilizer code is real and has $ X $ and $ Z $ exactly transversal (or at least is equivalent to such a stabilizer code) if and only if $ n $ is odd and all stabilizer generators have $ wt_X,wt_Z,n_Y,n_X,n_Z $ all even.
    
\end{lemma}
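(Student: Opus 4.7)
The plan is to deduce the lemma as a straightforward conjunction of \cref{realStabilizerCode} and \cref{lem:XexactZexactStabilizerCode}, with one small arithmetic observation linking $n_X, n_Z$ to the other weights. Since the biconditional has two directions, I will handle them separately, but neither direction should involve genuinely new content: the content sits in the two preceding lemmas.

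For the forward direction, I would assume the stabilizer code is real and has $X, Z$ exactly transversal (up to equivalence). From \cref{lem:XexactZexactStabilizerCode}, $n$ is odd and every stabilizer generator $E$ satisfies $wt_X(E)$ even and $wt_Z(E)$ even. From \cref{realStabilizerCode}, every generator also satisfies $n_Y(E)$ even. It then only remains to note the identities $n_X(E) = wt_X(E) - n_Y(E)$ and $n_Z(E) = wt_Z(E) - n_Y(E)$ (already used in the proof of \cref{lem:wtY}); both right-hand sides are differences of even numbers, so $n_X(E)$ and $n_Z(E)$ are even as well.

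For the reverse direction, I would assume $n$ is odd and all five quantities $wt_X, wt_Z, n_Y, n_X, n_Z$ are even on every generator. Then in particular $wt_X$ and $wt_Z$ are even, so \cref{lem:XexactZexactStabilizerCode} applies to give (equivalence to) a code with $X,Z$ exactly transversal; and $n_Y$ even on every generator means \cref{realStabilizerCode} gives realness. Conjoining these conclusions yields the claim.

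The main (minor) obstacle is being careful that the five conditions $wt_X, wt_Z, n_Y, n_X, n_Z$ are somewhat redundant: knowing any three of $\{wt_X, wt_Z, n_Y, n_X, n_Z, wt\}$ modulo $2$ typically determines the rest via the relations $wt_X = n_X + n_Y$, $wt_Z = n_Z + n_Y$, $wt = n_X + n_Y + n_Z$. So the lemma is really a packaged restatement with all of them listed for convenience, and the proof just needs to note that this redundancy makes the equivalence with the combined hypotheses of the previous two lemmas exact. No new calculation beyond this redundancy check and citing the two lemmas is needed.
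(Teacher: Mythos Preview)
Your proposal is correct and matches the paper's approach exactly: the paper's proof is the single sentence ``The result follows from \cref{realStabilizerCode} and \cref{lem:XexactZexactStabilizerCode},'' and your write-up simply unpacks that citation, adding the parity identities $n_X = wt_X - n_Y$ and $n_Z = wt_Z - n_Y$ to make the redundancy among the five conditions explicit.
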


\begin{proof}
    The result follows from \cref{realStabilizerCode} and \cref{lem:XexactZexactStabilizerCode}.
\end{proof}

\cref{realAndExactStabilizerCode} provides an easy condition for when we can apply \cref{thm:main} to stabilizer codes. So we obtain the following proposition.

\begin{prop}
Suppose we have an $[[n,1,d]]$ stabilizer code with $n$ odd. Furthermore suppose that the number of $X$'s, $Y$'s, and $Z$'s in each stabilizer generator is even. Then $A_i = 0$ for odd $i$ and $A_i = B_i$ for even $i$. It follows that $d$ is odd. 
\end{prop}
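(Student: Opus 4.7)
The plan is to reduce the proposition directly to the two main theorems of the previous section. First I would verify that the proposition's hypothesis is exactly what \cref{realAndExactStabilizerCode} requires. The proposition assumes $n_X$, $n_Y$, and $n_Z$ are even for every stabilizer generator; since $wt_X = n_X + n_Y$ and $wt_Z = n_Z + n_Y$, this automatically forces $wt_X$ and $wt_Z$ to be even as well. Combined with $n$ odd, this matches the hypothesis of \cref{realAndExactStabilizerCode}, so the given code is either real with $X$ and $Z$ exactly transversal, or is equivalent to such a code.

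Next I would apply \cref{thm:Aodd} and \cref{thm:main} to the representative in the equivalence class that actually has $X$ and $Z$ exactly transversal. This yields $A_i = 0$ for odd $i$ and $A_i = B_i$ for even $i$ for that representative. Then \cref{lem:equivalentCodes} pulls these identities back to the original code, since the weight enumerators $A$ and $B$ are equivalence invariants.

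Finally, to conclude that $d$ is odd, I would apply \cref{lem:WtEnumDist}. The iff there implies $d$ is the smallest index where $A_i \neq B_i$ (otherwise we would have $A_i = B_i$ for all $i < d+1$, contradicting that $d$ is the distance). If $d$ were even, \cref{thm:main} would force $A_d = B_d$, giving a contradiction, so $d$ must be odd. The only subtlety I anticipate, rather than a genuine obstacle, is the bookkeeping across equivalence: one must remember to invoke \cref{lem:equivalentCodes} before citing \cref{thm:Aodd} and \cref{thm:main}, since these theorems are stated for codes that literally have $X$ and $Z$ exactly transversal, not merely codes equivalent to such.
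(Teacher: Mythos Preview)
Your proposal is correct and follows essentially the same route as the paper: invoke \cref{realAndExactStabilizerCode} to get a real code with $X$ and $Z$ exactly transversal (up to equivalence), apply the two main theorems there, and transport the conclusions via \cref{lem:equivalentCodes}. Your version is in fact more careful than the paper's terse proof, since you explicitly check that $n_X,n_Y,n_Z$ even forces $wt_X,wt_Z$ even and you spell out the distance argument via \cref{lem:WtEnumDist}, both of which the paper leaves implicit.
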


\begin{pf}
    By \cref{realAndExactStabilizerCode} such a code must be (at least equivalent to) a real code with $ X $ and $ Z $ exactly transversal. Since equivalent codes have the same weight enumerator, by \cref{lem:equivalentCodes}, the result follows.
\end{pf}

\ \\
\section*{Acknowledgments} 
We would like to thank Michael Gullans and Victor Albert for helpful conversations. This research was supported in part by NSF QLCI grant OMA-2120757.

\appendix
\section{Weight Enumerators for various real codes with $X$ and $Z$ exactly transversal}

\begin{table}[htp]
    \centering
    \hspace*{-0.25cm}
    \begin{tabular}{|p{0.11\textwidth}|l|} \hline 
       $[[5,1,3]]$ & \begin{tabular}{l} $A = (1,0,0,0,15,0)$ \\ $B = (1,0,0,30,15,18)$ \end{tabular} \\  \hline 
        $[[7,1,3]]_\text{Steane}$ & \begin{tabular}{l} $A = (1,0,0,0,21,0,42,0)$ \\
    $B = (1,0,0,21,21,126,42,45)$
    \end{tabular} \\ \hline 
    $[[9,1,3]]_\text{Shor}$ & \begin{tabular}{l} 
    $A =  (1,0,9,0,27,0,75,0,144,0)$ \\
   $ B = (1,0,9,39,27,207,75,333,144,189)$
    \end{tabular} \\ \hline 
    $((11,2,3))$ \cite{nonadditiveexample} & \begin{tabular}{l} 
    $A = \qty(1,0,0,0,\frac{110}{3},0,88,0,605,0,\frac{880}{3},0)$ \\
    $B = \qty(1, 0, 0, \frac{55}{3}, \frac{110}{3}, 88, 88, 1210, 605, \frac{4400}{3}, \frac{880}{3}, 289 )$ 
    \end{tabular} \\  \hline 
    $[[11,1,5]]$ & \begin{tabular}{l} 
    $A = (1,0,0,0,0,0,198,0,495,0, 330 ,0 )$ \\
    $B = (1,0,0,0,0,198,198,990, 495,1650,330,234)$ 
    \end{tabular} \\ \hline 
    \end{tabular}
    \caption{}
\end{table}

Notice that each code in the table has $A_i = 0$ for odd $i$ and $A_i = B_i$ for even $i$. Also notice that $ n $ is odd and $d$ is odd. We have included the $((11,2,3))$ non-additive code to affirm that the results above hold for non-additive codes.

\bibliography{biblio}

\end{document}